\documentclass[12pt]{article}

\usepackage[utf8]{inputenc}
\usepackage{amsmath,amsfonts, amsthm}
\usepackage[a4paper, total={16cm, 26cm}]{geometry}
\usepackage[small]{titlesec}
\usepackage{graphicx,color}
\usepackage{mathrsfs}
\usepackage{hyperref}

\setlength{\parskip}{0.25cm plus4mm minus3mm} % spacing between paragraphs
\setlength{\parindent}{0.75cm}
\linespread{1.2}

\newtheorem{proposition}{Proposition}[section]

\title{\large\bf Learning a latent pattern of heterogeneity in the \\ \bigskip innovation rates of a time series of counts}

\author{
\begin{tabular}[t]{c @{\extracolsep{1.5cm}} c @{\extracolsep{1.5cm}} c}
{\normalsize Helton Graziadei} & {\normalsize Hedibert F. Lopes} & {\normalsize Paulo C. Marques F.} \\
\textit{\small USP - S\~ao Paulo} & \textit{\small Insper - S\~ao Paulo} & \textit{\small Insper - S\~ao Paulo}
\end{tabular}
}

\date{July 2019}

\begin{document}

\maketitle

\begin{abstract}
We develop a Bayesian hierarchical semiparametric model for phenomena related to time series of counts. The main feature of the model is its capability to learn a latent pattern of heterogeneity in the distribution of the process innovation rates, which are softly clustered through time with the help of a Dirichlet process placed at the top of the model hierarchy. The probabilistic forecasting capabilities of the model are put to test in the analysis of crime data in Pittsburgh, with favorable results.
\end{abstract}

\section{Introduction}

Time series of counts are associated with a multiplicity of phenomena in fields as diverse as epidemiology, econometrics, finance, environmental studies, and public policy \cite{weiss}. In this paper, we develop a model for this kind of data, taking into account the possible existence of heterogeneities in the process distribution as it evolves through time.

In a general setting, we consider a Markovian process, for which the current count is modeled as some functional of the count at the previous epoch, plus a stochastic innovation, whose expectation may be specific to the current epoch. Modeling these unobservable innovation rates hierarchically in a suitable way, we can learn a latent pattern of heterogeneity in their distribution, and this information can be incorporated in the forecasting of future counts.

Our investigation implements this general idea in a particular setting built from two main components. Initially, we generalize the first-order integer autoregressive model (INAR(1) model hereafter), introduced in the seminal papers of McKenzie \cite{mckenzie} and Al-Osh and Alzaid \cite{alosh}, allowing for different values of the innovation rates at different times. Subsequently, this generalized model is extended hierarchically, with the help of a Dirichlet process \cite{ferguson}. This gives us a semiparametric model, which, due to the properties of the Dirichlet process, is capable of clustering the values of the innovation rates through time, based on the information contained in the observed counts, thereby allowing us to identify different innovation regimes in the time evolution of the process. Forecasting within this probabilistic model is made straightforwardly through the appropriate posterior predictive distributions.

The paper is organized as follows. In Section \ref{sec:ginar}, we generalize the original INAR(1) model, allowing for distinct innovation rates at different epochs of the process. In Section \ref{sec:daug}, this generalized INAR(1) model is data augmented, leading to a conditional specification of the model which enables the derivation of simple forms for the model parameters and latent variables full conditional distributions. The necessary Dirichlet process definitions and properties are briefly reviewed in Section \ref{sec:dirichlet}. In Section \ref{sec:dpinar}, we set up the DP-INAR(1) model introducing a Dirichlet process at the top of the hierarchy developed in Section \ref{sec:ginar}. After the forms of the prior distributions have been specified, we derive simple closed forms for the full conditional distributions of the model parameters and latent variables. We carefully consider the choice of prior parameters in Section \ref{sec:priors}. In Section \ref{sec:forecast}, we show how to use the DP-INAR(1) model to do Bayesian forecasting. In Section \ref{sec:pitt}, we put all the analytical results to work in the forecasting of crime data in Pittsburgh, US. In this application, the DP-INAR(1) model outperforms the original INAR(1) model in the majority of the patrol areas.

\section{Generalized INAR(1) model}\label{sec:ginar}

We begin by generalizing the original INAR(1) model of McKenzie \cite{mckenzie} and Al-Osh and Alzaid \cite{alosh} as follows.

Let $\{Y_t\}_{t\geq 1}$ be an integer-valued time series, and let the \textit{innovations} $\{Z_t\}_{t\geq 2}$, given positive parameters $\{\lambda_t\}_{t\geq 2}$, be a sequence of conditionally independent $\text{Poisson}(\lambda_t)$ random variables. Given a parameter $\alpha\in[0,1]$, let $\{B_i(t):i\geq 0, t\geq 2\}$ be a family of conditionally independent and identically distributed $\text{Bernoulli}(\alpha)$ random variables. Furthermore, given all the parameters, assume that the innovations $\{Z_t\}_{t\geq 2}$ and the family $\{B_i(t):i\geq 0, t\geq 2\}$ are conditionally independent. The generalized INAR(1) model is defined by the functional relation
$$
  Y_t = \alpha\circ Y_{t-1} + Z_t,
$$
\noindent for $t\geq 2$, in which $\circ$ denotes the binomial thinning operator, defined by $\alpha\circ Y_{t-1}=\sum_{i=1}^{Y_{t-1}} B_i(t)$, if $Y_{t-1}>0$, and $\alpha\circ Y_{t-1}=0$, if $Y_{t-1}=0$. In the homogeneous case,  when all the $\lambda_t$'s are assumed to be equal, we recover the original INAR(1) model.

This model can be interpreted as specifying a birth-and-death process, in which, at epoch $t$, the number of cases $Y_t$ is equal to the new cases $Z_t$ plus the cases that survived from the previous epoch; the role of the binomial thinning operator being to remove a random number of the $Y_{t-1}$ cases present at the previous epoch $t-1$.

Let $y=(y_1,\dots,y_T)$ denote the values of an observed time series. For simplicity, we assume that $Y_1=y_1$ with probability one. Since the process $\{Y_t\}_{t\geq 1}$ is Markovian, the joint distribution of $Y_1,\dots,Y_T$, given parameters $\alpha$ and $\lambda=(\lambda_2,\dots,\lambda_T)$, can be factored as
$$
\Pr\{Y_1=y_1,\dots,Y_T=y_T\mid\alpha,\lambda\} = \prod_{t=2}^T \Pr\{Y_t=y_t\mid Y_{t-1}=y_{t-1},\alpha,\lambda_t\}.
$$

Since, with probability one, $\alpha\circ Y_{t-1}\leq Y_{t-1}$ and $Z_t\geq 0$, by the law of total probability and  the definition of the generalized INAR(1) model we have that
\begingroup
\addtolength{\jot}{0.25cm}
\begin{align*}
\Pr\{Y_t = y_t &\mid Y_{t-1} = y_{t-1},\alpha,\lambda_t\} = \Pr\{\alpha \circ Y_{t-1} + Z_t = y_t \mid Y_{t-1} = y_{t-1}, \alpha,\lambda_t\} \\ 
  &=  \Pr\!\left\{\sum_{i = 1}^{Y_{t-1}} B_i(t) + Z_t = y_t \,\;\Bigg\vert\;\, Y_{t-1} = y_{t-1},\alpha,\lambda_t\right\}  \\ 
  &= \sum_{m_t = 0}^{\min\{y_t,\,y_{t-1}\}} \Pr\!\left\{\sum_{i = 1}^{y_{t-1}} B_i(t) = m_t, Z_t = y_t - m_t \,\;\Bigg\vert\;\, \alpha,\lambda_t\right\} \\ 
  &= \sum_{m_t = 0}^{\min\{y_t,\,y_{t-1}\}} \Pr\!\left\{\sum_{i = 1}^{y_{t-1}} B_i(t) = m_t \,\;\Bigg\vert\;\, \alpha\right\} \Pr\{Z_t = y_t - m_t \mid\lambda_t\}.
\end{align*}
\endgroup

Hence, the generalized INAR(1) model likelihood function is given by
$$
L_y(\alpha,\lambda) = \prod_{t=2}^T \sum_{m_t = 0}^{\min\{y_{t-1},\,y_t\}} \binom{y_{t-1}}{m_t} \alpha^{m_t} (1-\alpha)^{y_{t-1} - m_t} \left( \frac{e^{-\lambda_t}\lambda_t^{y_t - m_t}}{(y_t - m_t)!} \right)\!.
$$

In the next section, we show how the introduction of certain latent (unobservable) random variables allows us to specify the generalized INAR(1) model in terms of a set of conditional distributions. This alternative representation leads to a factorization of the model joint distribution which is the key element propelling our Monte Carlo simulations.

\section{Data augmentation}\label{sec:daug}

In the generalized INAR(1) model, suppose that, in addition to the values of the counts $Y_1,\dots,Y_T$, we could observe the values of the {\it maturations} $M_t=\alpha\circ Y_{t-1}$. The $M_t$'s would tell us the number of cases that matured (survived) from the previous epoch, breaking down $Y_t$ into two parcels: maturations plus innovations.

This is an example of data augmentation \cite{tanner,vandyk}, in which the introduction of the unobservable maturations, with suitable conditional distributions, factors the model into more manageable pieces. Within this data augmentation scheme, we postulate that
$$
  M_t\mid \alpha, Y_{t-1} = y_{t-1} \sim \text{Binomial}(y_{t-1},\alpha),
$$
and
$$
  \Pr\{Y_t=y_t\mid M_t=m_t,\lambda_t\} =
      \frac{e^{-\lambda_t}\lambda_t^{y_t - m_t}}{(y_t - m_t)!}\,\mathbb{I}_{\{m_t,m_{t+1},\,\dots\}}(y_t),
$$
in which $\mathbb{I}_A$ denotes the indicator function of the set $A$, defined by $\mathbb{I}_A(x)=1$, if $x\in A$, and $\mathbb{I}_A(x)=0$, if $x\notin A$.

% TODO: trocar para m_t \geq 0?
Using the law of total probability and the product rule, we have that
\begingroup
\addtolength{\jot}{0.25cm}
\begin{align*}
\Pr\{Y_t &= y_t \mid Y_{t-1} = y_{t-1},\alpha,\lambda_t\} = \sum_{m_t=0}^{y_{t-1}}\Pr\{Y_t = y_t, M_t = m_t\mid Y_{t-1} = y_{t-1},\alpha,\lambda_t\} \\
  &= \sum_{m_t=0}^{y_{t-1}}\Pr\{Y_t=y_t \mid M_t=m_t,\lambda_t\} \times \Pr\{M_t=m_t\mid Y_{t-1}=y_{t-1},\alpha\},
\end{align*}
\endgroup
in which, following the data augmentation scheme, we took advantage of the appropriate conditional independences.

Since
\begingroup
\addtolength{\jot}{0.25cm}
\begin{align*}
  \mathbb{I}_{\{m_t,m_{t+1},\dots\}}(y_t) \times \mathbb{I}_{\{0,1,\dots,y_{t-1}\}}(m_t) &= \mathbb{I}_{\{0,1,\dots,y_t\}}(m_t) \times \mathbb{I}_{\{0,1,\dots,y_{t-1}\}}(m_t) \\ &= \mathbb{I}_{\{0,1,\dots,\min\{y_t,y_{t-1}\}\}}(m_t),
\end{align*}
\endgroup
comparing the expression above for $\Pr\{Y_t = y_t \mid Y_{t-1} = y_{t-1},\alpha,\lambda_t\}$ with the results in the previous section, we come to the conclusion that this is a valid data augmentation scheme, since it induces the same generalized INAR(1) model likelihood function. 

In the next section, we recollect the main definitions and results related to the Dirichlet Process which are necessary to build-up our semiparametric hierarchical model. The data augmentation scheme developed above will come in handy in the derivation of the full conditional distributions of the complete model.

\section{The Dirichlet process}\label{sec:dirichlet}

Suppose that we represent our uncertainties about quantities assuming values in a sampling space $\mathscr{X}$, with sigma-field $\mathscr{B}$, by means of an underlying probability space $(\Omega,\mathscr{F},\Pr)$.

A mapping $\mathbb{G}:\mathscr{B}\times\Omega\to[0,1]$ is a random probability measure if $\mathbb{G}(\,\cdot\,,\omega)$ is a probability measure over $(\mathscr{X},\mathscr{B})$, for every $\omega\in\Omega$, and $\mathbb{G}(B)=\mathbb{G}(B,\cdot\,)$ is a random variable, for each $B\in\mathscr{B}$.

Ferguson \cite{ferguson} defined a random probability measure $\mathbb{G}$ descriptively as follows. Let $\beta$ be a finite nonzero measure over $(\mathscr{X},\mathscr{B})$ and postulate that for each $\mathscr{B}$-measurable partition $\{B_1,\dots,B_k\}$ of $\mathscr{X}$ the random vector $(\mathbb{G}(B_1),\dots,\mathbb{G}(B_k))$ has the ordinary Dirichlet distribution with parameters $(\beta(B_1),\dots,\beta(B_k))$. In this case, we say that $\mathbb{G}$ is a Dirichlet process with base measure $\beta$, and use the notation $\mathbb{G}\sim\text{DP}(\beta)$. Ferguson proved that $\mathbb{G}$ is a properly defined random process in the sense of Kolmogorov's consistency theorem.

Defining the concentration parameter $\tau=\beta(\mathscr{X})$, and the base probability measure $G_0$ by $G_0(B)=\beta(B)/\beta(\mathscr{X})$, it follows from the usual properties of the Dirichlet distribution that $\text{E}[\mathbb{G}(B)] = G_0(B)$ and $\text{Var}[\mathbb{G}(B)] = G_0(B) (1-G_0(B))/(\tau+1)$, for every $B\in\mathscr{B}$. Therefore, $\mathbb{G}$ is centered on $G_0$, and $\tau$ controls the concentration of $\mathbb{G}$ around $G_0$. In terms of the concentration parameter and the base probability measure, we write $\mathbb{G}\sim\text{DP}(\tau\,G_0)$.

Inference with the Dirichlet process is tractable. In particular, Ferguson proved that the Dirichlet process is closed under sampling: if $X_1,\dots,X_n$ are conditionally independent and identically distributed, given $\mathbb{G}\sim\text{DP}(\tau\,G_0)$, such that $\Pr\{X_i\in B\mid\mathbb{G}=G\} = G(B)$, for every $B$ in $\mathscr{B}$, then 
$$
  \mathbb{G}\mid X_1=x_1,\dots,X_n=x_n\sim\text{DP}\!\left((\tau+n)\left(\frac{\tau}{\tau+n}\,G_0+\frac{1}{\tau+n} \sum_{i=1}^n \mathbb{I}_B(x_i)\right)\right).
$$

Notice that, using the law of total expectation, we have
\begingroup
\addtolength{\jot}{0.25cm}
\begin{align*}
  \Pr\{X_{n+1}\in B\mid X_1,\dots,X_n\} &= \text{E}[\Pr\{X_{n+1}\in B\mid \mathbb{G}, X_1,\dots,X_n\} \mid X_1,\dots,X_n] \\
  &= \text{E}[\Pr\{X_{n+1}\in B\mid \mathbb{G}\} \mid X_1,\dots,X_n] \\
  &= \text{E}[\mathbb{G}(B)\mid X_1,\dots,X_n],
\end{align*}
\endgroup
almost surely, for every $B$ in $\mathscr{B}$, in which the second equality follows from the conditional independence of the $X_i$'s. Hence, the posterior predictive distribution is
$$
  \Pr\{X_{n+1}\in B\mid X_1=x_1,\dots,X_n=x_n\} = \frac{\tau}{\tau+n}\,G_0(B) + \frac{1}{\tau+n} \sum_{i=1}^n I_B(x_i).
$$

This expression of the posterior predictive distribution unleashes important features of the Dirichlet process, thereby showing how it can be used as a modeling tool. In particular, it defines a data generating process known as the P{\'o}lya-Blackwell-MacQueen urn \cite{blackwell}. If we imagine the sequential generation of the $X_i$'s, for $i=1,\dots,n$, we see that a value is generated anew from $G_0$ with probability proportional to $\tau$, or we repeat one the previously generated values with probability proportional to its multiplicity. This shows that, almost surely, realizations of a Dirichlet process $\mathbb{G}$ are discrete probability measures, maybe with denumerably infinite support, depending on the nature of $G_0$. Also, this data generating process associated with the P{\'o}lya-Blackwell-MacQueen urn implies that the $X_i$'s are clustered, which is the main feature of the Dirichlet process that we rely on to build our semiparametric model. Antoniak \cite{antoniak} derived the conditional distribution of the number of distinct $X_i$'s, that is, the number of clusters $K$, given the concentration parameter $\tau$, as
$$
  \Pr\{K=k\mid\tau\} =  S(n,k) \, \tau^k \, \frac{\Gamma(\tau)}{\Gamma(\tau + n)} \, \mathbb{I}_{\{1, 2, \dots, n\}}(k), 
$$
in which $S(n,k)$ denotes the unsigned Stirling number of the first kind. 

In the next section, we place a Dirichlet process at the top of the hierarchy of the generalized INAR(1) model, completing the specification of our semiparametric model, thereby being able to represent our uncertainty about the values of the unobservable innovation rates $\lambda_t$'s, given the information contained in the observed counts. In doing so, we benefit from the clustering properties of the Dirichlet process described above, identifying different regimes for the innovation rates as the process evolves through time.

\section{DP-INAR(1) model}\label{sec:dpinar}

The DP-INAR(1) model completes the generalized INAR(1) model defined in Section \ref{sec:ginar}, placing a Dirichlet process at the top of the hierarchy. Formally, we model the innovation rates $\lambda_2,\dots,\lambda_T$, given $\mathbb{G}\sim\text{DP}(\tau\,G_0)$, as conditionally independent and identically distributed, with $\Pr\{\lambda_t\in B\mid \mathbb{G}=G\} = G(B)$, for every Borel set $B$. The prior distributions for $\alpha$ and $\tau$ are $\text{Beta}(a_0^{(\alpha)},b_0^{(\alpha)})$ and $\text{Gamma}(a_0^{(\tau)}, b_0^{(\tau)})$, respectively. The base probability measure $G_0$ is a $\text{Gamma}(a_0^{(G_0)}, b_0^{(G_0)})$ distribution. In Section \ref{sec:priors}, we discuss in detail the choice of prior parameters.

Figure  \ref{fig:model} displays a graphical representation of the DP-INAR(1) model. In the graph, absence of an arrow connecting two random objects means that they are conditionally independent given their parents (see \cite{jordan} for a witful discussion of graphical models).

Our next step is to derive the full conditional distributions for all latent variables and model parameters. For convenience, we adopt a simplified notation in the following derivations, using the same letter $p$ to denote different probability functions or densities, with distinctions made clear from the context.

\begin{figure}[t!]
\centering
\includegraphics[width=9cm]{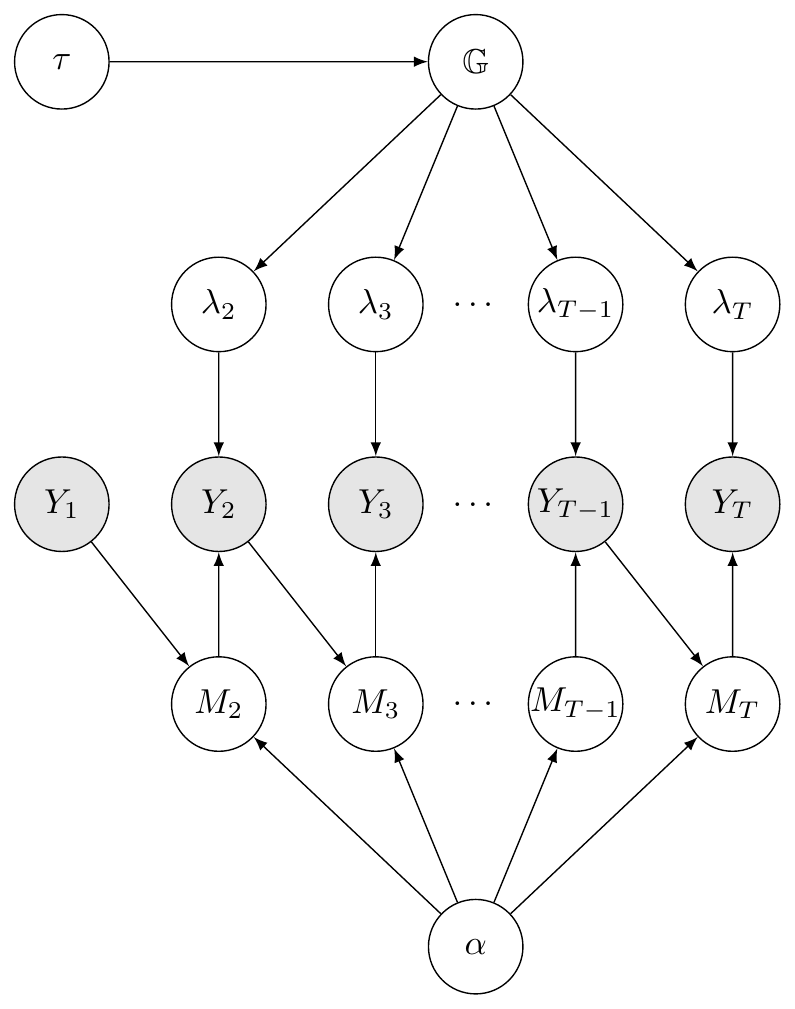}
\caption{The data augmented DP-INAR(1) model.}
\label{fig:model}
\end{figure}

Define $m=(m_2,\dots,m_T)$, and let $\mu_\mathbb{G}$ denote the distribution of $\mathbb{G}$. Marginalizing $\mathbb{G}$ on the graph, we have

\begingroup
\addtolength{\jot}{0.25cm}
\begin{align*} 
p(y,m,\alpha,\lambda) &= \int p(y,m,\alpha,\lambda \mid G)\,d\mu_{\mathbb{G}}(G) \\
&= \Bigg\{\prod_{t=2}^T p(y_t \mid m_t, \lambda_t) \ p(m_t \mid y_{t-1}, \alpha) \Bigg\} \times \pi(\alpha) \times \int \prod_{t=2}^T p(\lambda_t \mid G) \, d\mu_{\mathbb{G}}(G).
\end{align*} 
\endgroup

In this expression, the last integral is the joint distribution $p(\lambda_2,\dots,\lambda_T)$, pointing out that the random vector $(\lambda_2,\dots,\lambda_T)$ has an exchangeable distribution. Due to this distributional symmetry and the product rule, we can always make $p(\lambda_2,\dots,\lambda_T)$ depend on a certain $\lambda_t$ only through $p(\lambda_t\mid\lambda_{\setminus t})$, in which $\lambda_{\setminus t}$ denotes the vector $\lambda$ with the component $\lambda_t$ removed. Using the symbol $\boldsymbol{\propto}$ to denote proportionality up to a suitable normalization factor, and the label ``all others'' to designate the observed counts $y$, and all the other latent variables and model parameters, with the exception of the one under consideration, we have that
$$
p(\lambda_t \mid \text{all others}) \boldsymbol{\propto} p(y, m, \alpha, \lambda) \boldsymbol{\propto} p(\lambda_t \mid \lambda_{\setminus t}) \ p(y_t \mid m_t, \lambda_t) \boldsymbol{\propto} e^{-\lambda_t} \lambda_t^{y_t - m_t} \ p(\lambda_t \mid \lambda_{\setminus t}).
$$

Therefore, the P{\'o}lya-Blackwell-MacQueen urn process yields the full conditional distribution of $\lambda_t$ as the mixture
\begin{align*}
\displaystyle  \lambda_t \mid \text{all others} &\sim  \frac{\tau \cdot (b_0^{(G_0)})^{a_0^{(G_0)}} \cdot \Gamma(y_t - m_t + a_0^{(G_0)})}{\Gamma(a_0^{(G_0)}) (b_0^{(G_0)} + 1)^{y_t - m_t + a_0^{(G_0)}}} \times \text{Gamma}(y_t - m_t + a_0^{(G_0)},b_0^{(G_0)} + 1) \\ 
&\qquad+ \sum_{r \neq t} \lambda_r^{y_t - m_t} e^{-\lambda_r} \delta_{\{\lambda_r\}},
\end{align*}
in which $\delta_{\{\lambda_r\}}$ denotes a point mass at $\lambda_r$. In the former expression we suppressed the normalization constant which makes all mixture weights add up to one.

The derivations of the full conditionals for $\alpha$ and $m_t$ are straightforward.
$$
  \alpha \mid \text{all others} \sim \text{Beta}\!\left(a^{(\alpha)}_0 + \sum_{t=2}^T m_t, b^{(\alpha)}_0 + \sum_{t=2}^T (y_{t-1} - m_t)\right).
$$

$$ 
  p(m_t \mid \text{all others}) \boldsymbol{\propto} \frac{1}{m_t!(y_t - m_t)!(y_{t-1} - m_t)!} \Bigg(\frac{\alpha}{\lambda_t (1 - \alpha)} \Bigg)^{m_t} \mathbb{I}_{ \{0, 1, \ldots, \min{\{y_{t-1}, y_t\}} \}}(m_t).
$$

West \cite{west} shows how to derive the full conditional distribution of the concentration parameter $\tau$ in simple closed form, after the introduction of an auxiliary random variable $U$. Using this technique, we have the full conditionals
$$
  U \mid \text{all others} \sim \text{Beta}(\tau + 1, T - 1); 
$$

\begin{align*}
\displaystyle \tau \mid \text{all others} &\sim \frac{\Gamma(a_0^{(\tau)} + k )}{(b_0^{(\tau)} - \log u)^{a_0^{(\tau)}+k-1}} \times \text{Gamma}(a_0^{(\tau)} + k, b_0^{(\tau)} - \log u) \\ 
&\qquad+ \frac{(T-1) \cdot \Gamma(a_0^{(\tau)} + k - 1)}{(b_0^{(\tau)} - \log u)^{a_0^{(\tau)} + k - 1}} \times \text{Gamma}(a_0^{(\tau)} + k - 1, b_0^{(\tau)} - \log u), 
\end{align*}
in which we suppressed the normalization constant which makes the two mixture weights add up to one.

These full conditional distributions allow us to explore the model posterior distribution by coding a plain Gibbs sampler \cite{gamerman}. Experimentation with this Gibbs sampler shows that, as pointed out by Escobar and West \cite{escobar} in a similar context, we can improve mixing by resampling simultaneously the values of all $\lambda_t$'s inside the same cluster at the end of each iteration. Formally, let $(\lambda^*_1,\dots,\lambda^*_k)$ be the $k$ unique values among $(\lambda_2,\dots,\lambda_T)$ and define the number of occupants of cluster $j$ by $n_j = \sum_{t=2}^T \mathbb{I}_{\{\lambda^*_j\}}(\lambda_t)$. It follows that
$$
  \lambda_j^{*} \mid \text{all others} \sim \text{Gamma}\!\left(a_0^{(G_0)} + \sum_{t = 2}^T (y_t - m_t)\cdot \mathbb{I}_{\{\lambda^*_j\}}(\lambda_t), b_0^{(G_0)} + n_j\right).
$$
for $j = 1,\ldots, k$. After the $\lambda^*_j$'s are sampled from this distribution, we update the values of all $\lambda_t$'s inside each cluster by the corresponding $\lambda^*_j$.

In the next section, we discuss how to choose the prior parameters for the DP-INAR(1) model.

\section{Choice of prior parameters}\label{sec:priors}

Extending the original scheme proposed by Dorazio \cite{dorazio}, we choose the parameters $a_0^{(\tau)}$ and $b_0^{(\tau)}$ of the $\tau$ prior by minimizing the Kullback-Leibler divergence between the prior distribution of the number of clusters $K$ and a uniform discrete distribution on a suitable range. Using the results in Section \ref{sec:dirichlet}, the marginal probability function of $K$ can be computed as
$$
  \pi(k) = \int_{0}^{\infty} \Pr\{K=k\mid \tau\} \, \pi(\tau) \, d\tau = \frac{b_0^{(\tau)} S(T-1, k)}{\Gamma(a_0^{(\tau)})} I(a_0^{(\tau)}, b_0^{(\tau)}; k), 
$$
for $k=1,\dots,T-1$, in which
$$
  I(a_0^{(\tau)}, b_0^{(\tau)}; k) = \int_{0}^{\infty} \frac{\tau^{k + a_0^{(\tau)} - 1} \ e^{-b_0^{(\tau)} \tau} \ \Gamma(\tau)}{\Gamma(\tau+T-1)} \, d\tau.
$$
Using the information available about the phenomena under consideration to make a sensible choice for the integers $k_\text{min}$ and $k_\text{max}$, and letting $q$ be the probability function of a uniform discrete distribution on $\{k_\text{min},\dots,k_\text{max}\}$, that is
$$
  q(k) = \frac{1}{(k_\text{max}-k_\text{min}+1)} \, \mathbb{I}_{\{k_\text{min},\dots,k_\text{max}\}}(k),
$$
we find, by numerical integration and optimization, the values of $a_0^{(\tau)}$ and $b_0^{(\tau)}$ that minimize the Kullback-Leibler divergence
\begin{align*}
  \text{KL}[\pi &\;\Vert\; q] = \sum_{k=k_\text{min}}^{k_\text{max}} q(k) \log\left(\frac{q(k)}{\pi(k)}\right) \\
  &= {\footnotesize \text{(constant)}} + \log\Gamma(a_0^{(\tau)}) - a_0^{(\tau)}\log b_0^{(\tau)} - \frac{1}{(k_\text{max}-k_\text{min}+1)} \sum_{k=k_\text{min}}^{k_\text{max}} \log I(a_0^{(\tau)}, b_0^{(\tau)}; k).
\end{align*}

We choose the parameters $a_0^{(G_0)}$ and $b_0^{(G_0)}$ of the base probability density $g_0$ in a similar fashion, minimizing the Kullback-Leibler divergence between $g_0$ and a uniform distribution on a suitable range $[0,\lambda_\text{max}]$, in which $\lambda_\text{max}$ is chosen by taking into consideration the available information on the studied phenomena. Letting $h$ be a uniform density on $[0,\lambda_\text{max}]$, that is
$$
  h(\lambda) = \left(\frac{1}{\lambda_\text{max}}\right) \, \mathbb{I}_{[0,\lambda_\text{max}]}(\lambda),
$$
we find, by numerical optimization, the values of $a_0^{(G_0)}$ and $b_0^{(G_0)}$ that minimize the Kullback-Leibler divergence
\begin{align*}
 &\text{KL}[g_0 \;\Vert\; h] = \int_0^{\lambda_\text{max}} \left(\frac{1}{\lambda_\text{max}}\right) \log\left(\frac{1/\lambda_\text{max}}{g_0(\lambda)}\right) \, d\lambda \\
  &\quad= -\log \lambda_\text{max} - a_0^{(G_0)} \log b_0^{(G_0)} + \log \Gamma(a_0^{(G_0)})  - (a_0^{(G_0)} - 1)(\log \lambda_\text{max} - 1) + \frac{b_0^{(G_0)}\lambda_\text{max}}{2}.
\end{align*}

Choosing the parameters for the $\alpha$ prior is more straightforward, with $a_0^{(\alpha)}=b_0^{(\alpha)}=1$ being a natural choice.

\section{Bayesian forecasting}\label{sec:forecast}

The Gibbs sampler described in Section \ref{sec:dpinar} yields, marginally, a sample $\{\alpha^{(n)},\lambda^{(n)}\}_{n=1}^N$ from the posterior distribution. Note that, for $n=1,\dots,N$, we can obtain the number of clusters $k^{(n)}$ as the number of distinct entries in the vector $\lambda^{(n)}=(\lambda^{(n)}_1,\dots,\lambda^{(n)}_T)$. Uncertainty about future counts is represented by the $h$-steps-ahead posterior predictive distribution
$$
  Y_{T+h} \mid Y_1=y_1,\dots,Y_T=y_T,
$$
for some target $h\geq 1$. In particular, a pointwise forecast is obtained as a suitable summary of this posterior predictive distribution.

Using the law of total probability, the product rule, and simplifying the conditional independences in the model, we can write the posterior predictive probability function as
\begin{align*}
p(y_{T+h}\mid y_1,\dots,y_T) &= \int p(y_{T+h}\mid y_T,\alpha,\lambda_{T+1},\dots,\lambda_{T+h}) \\
  &\qquad\times \prod_{i=1}^h p(\lambda_{T+i}\mid\lambda_2,\dots,\lambda_{T+i-1}) \\
  &\qquad\times p(\alpha,\lambda_2,\dots,\lambda_T\mid y_1,\dots,y_T) \,d\alpha\,d\lambda_2\dots d\lambda_{T+h}.
\end{align*}

A nice property of the DP-INAR(1) model is that we can derive a simple analytical expression for the first factor in the integrand above.

\begin{proposition}\label{prop:pred}
The probability function of $Y_{t+h}$, given $Y_t=y_t$, $\alpha$, and $(\lambda_{t+1},\dots,\lambda_{t+h})$, can be writen as the convolution of a $\text{Bin}(y_t,\alpha^h)$ distribution and a $\text{Poisson}(\mu_h$) distribution,
$$
  p(y_{t+h}\mid y_t,\alpha,\lambda_{t+1},\dots,\lambda_{t+h}) = \sum_{m=0}^{\min\{y_t,y_{t+h}\}} \binom{y_t}{m} (\alpha^h)^m (1-\alpha^h)^{ y_t-m} \left(\frac{\mu_h^{y_{t+h}-m}e^{-\mu_h}}{(y_{t+h}-m)!}\right),
$$
in which
$$
  \mu_h = \sum_{i=1}^h \alpha^{h-i}\lambda_{t+i}.
$$
\end{proposition}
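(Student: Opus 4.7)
My plan is to prove Proposition \ref{prop:pred} by induction on the horizon $h \geq 1$. The base case $h = 1$ is immediate from the definition of the generalized INAR(1) model in Section \ref{sec:ginar}: conditional on $Y_t = y_t$, $\alpha$, and $\lambda_{t+1}$, the recursion $Y_{t+1} = \alpha \circ Y_t + Z_{t+1}$ exhibits $Y_{t+1}$ as the independent sum of a $\text{Bin}(y_t,\alpha)$ and a $\text{Poisson}(\lambda_{t+1})$, matching the claim with $\mu_1 = \lambda_{t+1}$.

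For the inductive step, I would assume that, conditional on $y_t$, $\alpha$, and $\lambda_{t+1},\ldots,\lambda_{t+h-1}$, the count $Y_{t+h-1}$ has the same distribution as $B_{h-1} + P_{h-1}$ for independent $B_{h-1} \sim \text{Bin}(y_t,\alpha^{h-1})$ and $P_{h-1} \sim \text{Poisson}(\mu_{h-1})$. Applying the recursion once more gives $Y_{t+h} = \alpha \circ (B_{h-1} + P_{h-1}) + Z_{t+h}$. The engine of the inductive step is then three standard closure properties of binomial thinning: (i) $\alpha \circ \text{Bin}(n,p) \sim \text{Bin}(n,\alpha p)$, (ii) $\alpha \circ \text{Poisson}(\lambda) \sim \text{Poisson}(\alpha \lambda)$, and (iii) the sum of independent Poissons is Poisson with summed rate. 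These yield $\alpha \circ B_{h-1} \sim \text{Bin}(y_t,\alpha^h)$ and $\alpha \circ P_{h-1} + Z_{t+h} \sim \text{Poisson}(\alpha \mu_{h-1} + \lambda_{t+h})$, with the two components independent. A one-line index shift then shows $\alpha \mu_{h-1} + \lambda_{t+h} = \sum_{i=1}^{h-1} \alpha^{h-i}\lambda_{t+i} + \lambda_{t+h} = \sum_{i=1}^{h} \alpha^{h-i}\lambda_{t+i} = \mu_h$, so $Y_{t+h}$ is distributed as the convolution of $\text{Bin}(y_t,\alpha^h)$ and $\text{Poisson}(\mu_h)$; writing out the mass function of this convolution with summation index $m$ running over the value of the binomial piece delivers the displayed formula.

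The only delicate point — and the main obstacle to making the induction airtight — is justifying that binomial thinning distributes over an independent integer-valued sum while preserving independence of the two thinned pieces. This follows directly from the construction of the model: the Bernoulli family $\{B_i(t+h)\}_{i\geq 0}$ used at epoch $t+h$ is iid and jointly independent of everything at previous epochs, so the defining sum $\sum_{i=1}^{B_{h-1}+P_{h-1}} B_i(t+h)$ can be split into the disjoint blocks $\sum_{i=1}^{B_{h-1}} B_i(t+h)$ and $\sum_{i=B_{h-1}+1}^{B_{h-1}+P_{h-1}} B_i(t+h)$, which are conditionally independent given $(B_{h-1},P_{h-1})$ and remain unconditionally independent because $B_{h-1}$ and $P_{h-1}$ are themselves independent. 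Once this observation is in place, each inductive step becomes a mechanical application of (i)–(iii), and the remaining computation of $\mu_h$ is routine bookkeeping.
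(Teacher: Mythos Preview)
Your proof is correct and, like the paper's, proceeds by induction on $h$; the base case and the recursion $\mu_{h}=\alpha\mu_{h-1}+\lambda_{t+h}$ match exactly. The difference lies in the engine of the inductive step. The paper works with conditional moment generating functions: it computes $M_{Y_{t+1}\mid Y_t}(s)=(\alpha e^s+1-\alpha)^{Y_t}\exp(\lambda_{t+1}(e^s-1))$, then for the step writes $M_{Y_{t+h+1}\mid Y_t}(s)=\mathrm{E}[M_{Y_{t+h+1}\mid Y_{t+h}}(s)\mid Y_t]$ via the tower property, substitutes $e^u=\alpha e^s+(1-\alpha)$, and applies the hypothesis to $M_{Y_{t+h}\mid Y_t}(u)$. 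You instead argue distributionally, invoking the thinning closure facts $\alpha\circ\text{Bin}(n,p)\sim\text{Bin}(n,\alpha p)$ and $\alpha\circ\text{Poisson}(\lambda)\sim\text{Poisson}(\alpha\lambda)$ together with the observation that thinning distributes over an independent sum while preserving independence. The MGF route buys a slicker bookkeeping of independence --- the factorization of the generating function does that work implicitly, so no separate argument about splitting the Bernoulli sum into disjoint blocks is needed. Your route buys a more transparent probabilistic picture and avoids any analytic machinery; the careful paragraph you include on why the two thinned pieces remain independent is exactly the extra ingredient your approach requires, and it is sound.
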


\begin{proof} 
We prove the result by induction. For $h=1$, using a simplified notation, the conditional moment generating function is given by
$$
  M_{Y_{t+1}\mid Y_t}(s) = \mathrm{E}\!\left[ e^{s Y_{t+1}}\mid Y_t\right] =
  \mathrm{E}\!\left[ e^{s(\alpha\circ Y_t+Z_{t+1})}\mid Y_t\right] =
  \mathrm{E}\!\left[ e^{s(\sum_{i=1}^{Y_t} B_i(t)+Z_{t+1})}\mid Y_t\right],
$$
But since $\{Z_t\}_{t\geq 2}$ is a sequence of conditionally independent random variables, which is also conditionally independent of $\{B_i(t):i\geq 0, t\geq 2\}$, we have that
$$ M_{Y_{t+1}\mid Y_t}(s) = \mathrm{E}\!\left[ e^{s\sum_{i=1}^{Y_t} B_i(t)}\mid Y_t\right] \mathrm{E}\!\left[ e^{sZ_{t+1}}\right] = (\alpha e^s+(1-\alpha))^{Y_t} \exp(\lambda_{t+1}(e^s-1)),$$ 
which is the product of the generating functions of a $\text{Binomial}(Y_t, \alpha)$ random variable and a $\text{Poisson}(\lambda_{t+1})$ random variable. Now, suppose the result holds for an arbitrary $h\geq 2$. Then,
\begingroup
\addtolength{\jot}{0.25cm}
\begin{align*}
  M_{Y_{t+h+1}\mid Y_t}(s) &= \mathrm{E}\!\left[e^{sY_{t+h+1}}\mid Y_t\right] = \mathrm{E}\!\left[\mathrm{E}\!\left[e^{sY_{t+h+1}}\mid Y_{t+h}\right]\mid Y_t\right] \\
  %&= \mathrm{E}\!\left[ (\alpha e^s+(1-\alpha))^{Y_{t+h}} \exp(\lambda_{t+h+1}(e^s-1))\mid Y_t \right] \\
  &= \mathrm{E}\!\left[e^{u Y_{t+h}} \mid Y_t \right]\exp(\lambda_{t+h+1}(e^s-1)),
\end{align*}
\endgroup
in which we defined $e^u=\alpha e^s+(1-\alpha)$. Consequently, from the induction hypothesis, we have that
\begin{align*}
  M_{Y_{t+h+1}\mid Y_t}(s) &= (\alpha^h e^u+(1-\alpha^h))^{Y_t} \exp(\mu_h(e^u-1))\exp(\lambda_{t+h+1}(e^s-1)) \\
  &= (\alpha^h (\alpha e^s+(1-\alpha))+(1-\alpha^h))^{Y_t} \exp(\mu_h((\alpha e^s+(1-\alpha))-1)) \\ &\quad\times \exp(\lambda_{t+h+1}(e^s-1)) \\
  &= (\alpha^{h+1} e^s+(1-\alpha^{h+1}))^{Y_t} \exp(\mu_{h+1}(e^s-1)), 
\end{align*}
in which $\mu_{h+1} = \alpha \mu_h + \lambda_{t+h+1}$. Hence, the result holds for $h+1$, completing the proof. 
\end{proof} 

% If the generating function can be factored as two generating functions which we can ``identify'', then the random variable is the sum of the two correspondent (independent) random variables, and we can compute the desired probabilities through a convolution.

Using the P{\'o}lya-Blackwell-MacQueen urn process repeatedly, for $n=1\dots,N$, we draw a sample $\{\lambda_{T+1}^{(n)},\dots,\lambda_{T+h}^{(n)}\}_{n=1}^N$ from $\prod_{i=1}^h p(\lambda_{T+i}\mid\lambda_2,\dots,\lambda_{T+i-1})$ sequentially as follows:
\begin{align*}
\lambda^{(n)}_{T+1} &\sim \frac{\tau}{\tau +T} \, G_0 + \frac{1}{\tau +T} \sum_{t=2}^{T} \delta_{\{\lambda^{(n)}_t\}}; \\
\lambda^{(n)}_{T+2} &\sim \frac{\tau}{\tau +T+1} \, G_0 + \frac{1}{\tau +T+1} \sum_{t=2}^{T+1} \delta_{\{\lambda^{(n)}_t\}}; \\
&\,\,\,\vdots \\
\lambda^{(n)}_{T+h} &\sim \frac{\tau}{\tau+T+h-1} \, G_0 + \frac{1}{\tau+T+h-1} \sum_{t=2}^{T+h-1} \delta_{\{\lambda^{(n)}_t\}}.
\end{align*}

Combining all these elements, we approximate the integral representation of the $h$-steps-ahead posterior predictive probability function by the Monte Carlo average
$$
p(y_{T+h}\mid y_1,\dots,y_T) \approx \frac{1}{N} \sum_{n=1}^N p(y_{T+h}\mid y_T,\alpha^{(n)},\lambda_{T+1}^{(n)},\dots,\lambda_{T+h}^{(n)}),
$$
for $y_{T+h}\geq 0$.

As a pointwise forecast $\hat{y}_{T+h}$, we compute the generalized median of the $h$-steps-ahead posterior predictive distribution, defined by
$$
  \hat{y}_{T+h} = \arg \min_{y_{T+h}\geq 0} \left|0.5 - \sum_{r=0}^{y_{T+h}} p(r\mid y_1,\dots,y_T)\right|.
$$

We use a form of cross-validation to evaluate the forecasting performance of the model. For an observed time series $y_1,\dots,y_T$, we pick some $T^*<T$, and treat the counts $y_{T^*},\dots,y_T$ as a holdout (test) sample. For $t\geq T^*$, we train the model conditioning only on the values $y_1,\dots,y_{t-1}$ and making an $h$-steps-ahead prediction $\hat{y}_{t+h}$. To score the forecast performance, we average the median deviations $|\hat{y}_{t+h}-y_{t+h}|$ over all predictions. This cross-validation procedure is depicted in Figure \ref{fig:cv}.

\begin{figure}[t!]
\centering
\includegraphics[width=13cm]{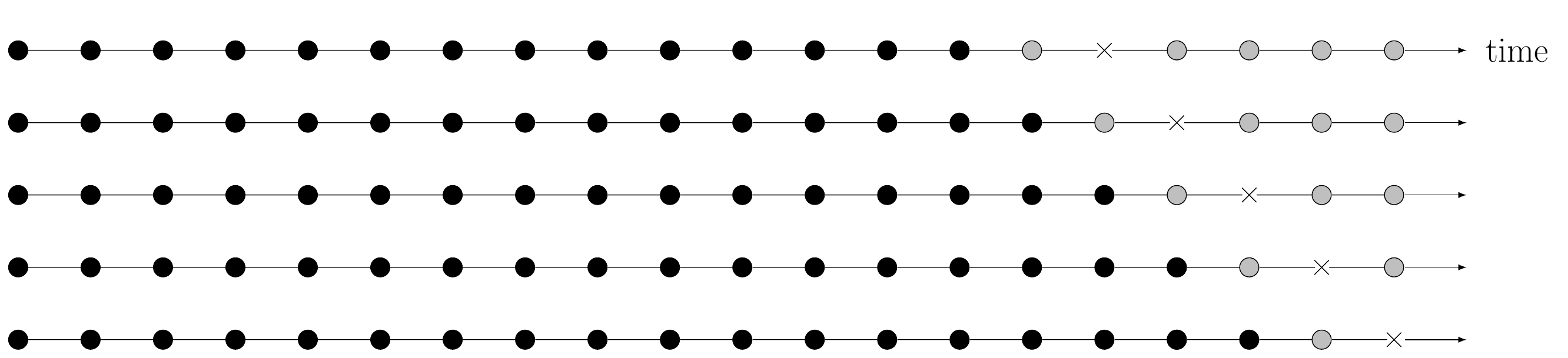}
\caption{Cross-validation scheme for two-steps-ahead predictions. For each line, the black dots indicate the training set. Predictions are made for the target epoch marked with an $\times$.}
\label{fig:cv}
\end{figure}

In the next section, we assess the forecasting performance of the DP-INAR(1) model, analyzing monthly time series of burglary occurrences in Pittsburgh, USA.

\section{Pittsburgh crime data}\label{sec:pitt}

In this section, we analyze monthly time series of burglary events in Pittsburgh, USA, from January $1990$ to December $2001$ \cite{crimedata}. In this dataset, each time series has a length of $144$ months and corresponds to a certain patrol area.

Figure \ref{fig:ts58} presents the time series for patrol area $58$, which displays substantial time heterogeneity and variation in the monthly counts of burglary events. In what follows, we use this patrol area $58$ to exemplify the model training procedure.

\begin{figure}[t!]
\centering
\includegraphics[width=16cm]{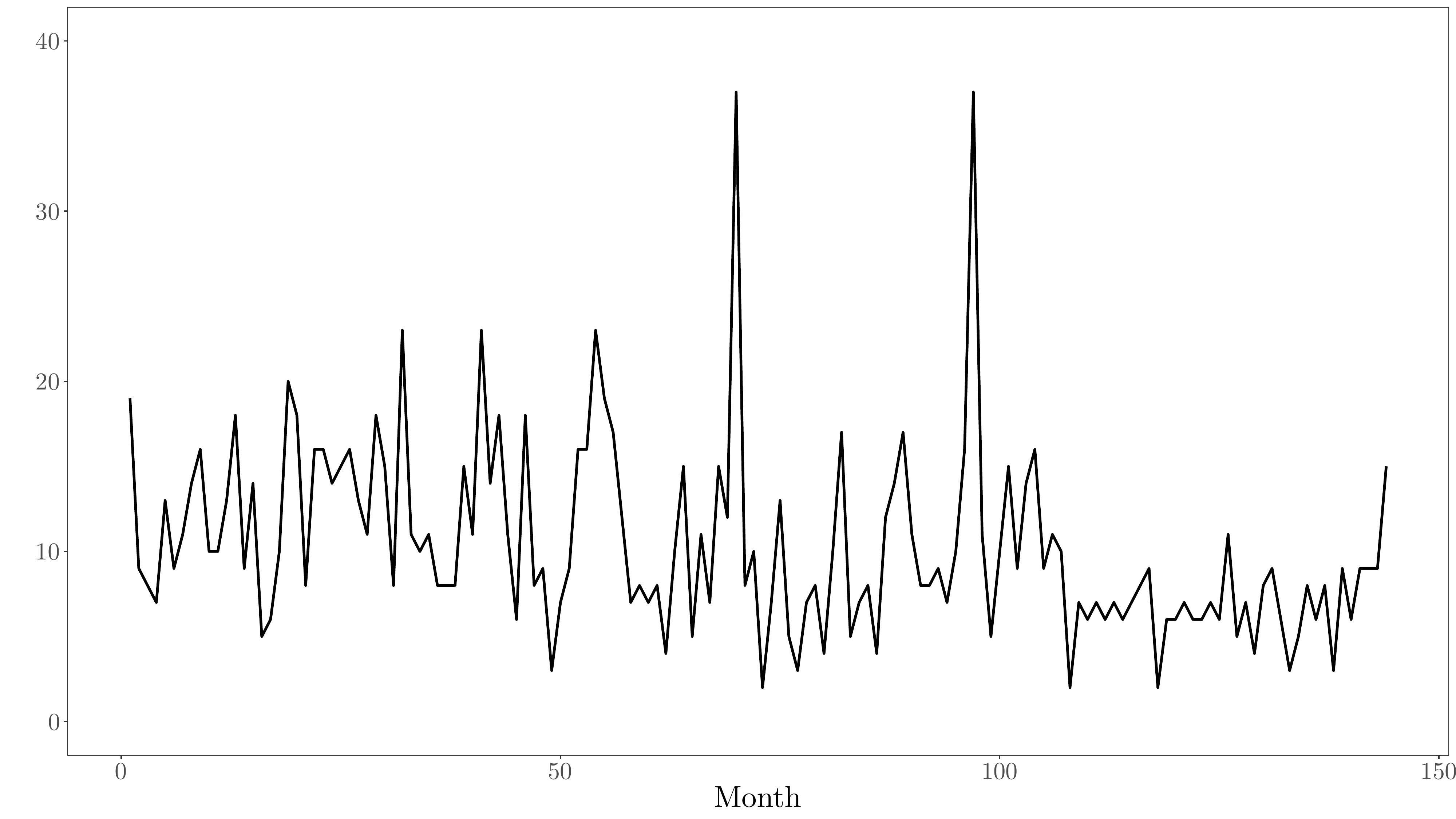}
\caption{Monthly burglary events for patrol area $58$.}
\label{fig:ts58}
\end{figure}

To determine the hyperparameters $a_0^{(\tau)}$ and $b_0^{(\tau)}$, the optimization procedure described in Section \ref{sec:priors}, with $k_{\min} = 1$ and $k_{\max} = 143$, yields $a_0^{(\tau)}=0.519$ and  $b_0^{(\tau)}=0.003$.

Using the procedure discussed in Section \ref{sec:priors}, we control the support of $G_0$ by choosing the value of $\lambda_{\max}$ to be the maximum observed count. Figure \ref{fig:optbase58} displays the level curves of $\text{KL}[g_0 \;\Vert\; h]$. The minimum is attained at $a_0^{(G_0)}=1.778$ and $b_0^{(G_0)}=0.096$.

For the thinning parameter $\alpha$, we adopt a uniform prior, choosing $a_0^{(\alpha)} = b_0^{(\alpha)} = 1$.

The marginal posterior distributions of parameters $\alpha$, $\lambda_{3}$, $\lambda_{18}$, and $\lambda_{96}$ are displayed in Figure \ref{fig:post58}. The posterior distribution of the thinning parameter $\alpha$ is reasonably concentrated, with posterior mean $0.19$, showing that the autoregressive component is not negligible. The posterior distributions of $\lambda_{3}$, $\lambda_{18}$ and $\lambda_{96}$ are fairly concentrated as well, with posterior means equal to $6.50$, $13.61$ and $32.01$, respectively, showing that different regimes of innovation rates were captured in the learning process. The Markov chains in Figure \ref{fig:chains58} indicate that proper mixing is achieved by the Gibbs sampler.

\begin{figure}[t!]
\centering
\includegraphics[width=16cm]{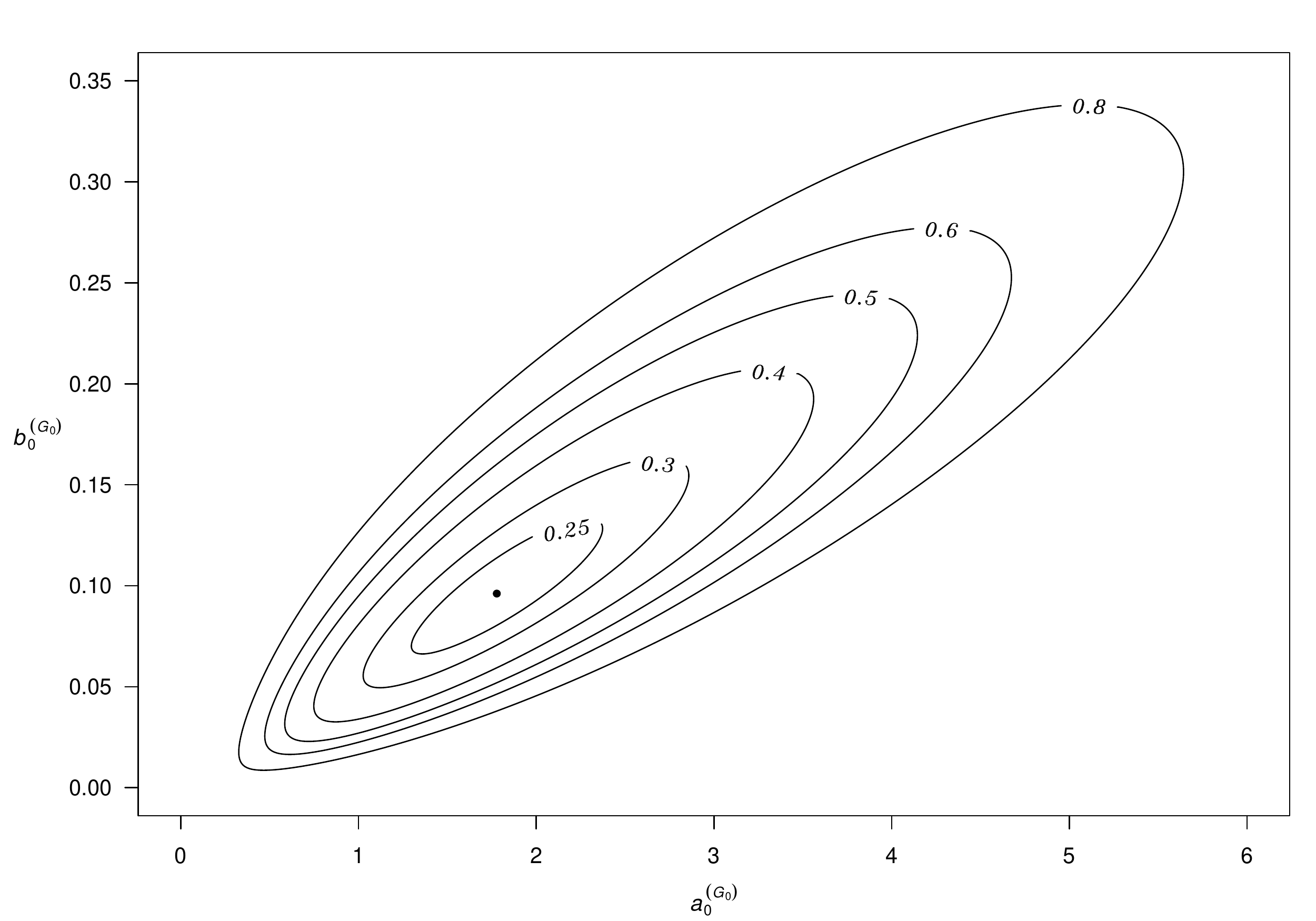}
\caption{Level curves of the Kullback-Leibler divergence associated with the optimization of the base measure hyperparameters for patrol area 58.}
\label{fig:optbase58}
\end{figure}

\begin{figure}[t!]
\centering
\includegraphics[width=16cm, height=14cm]{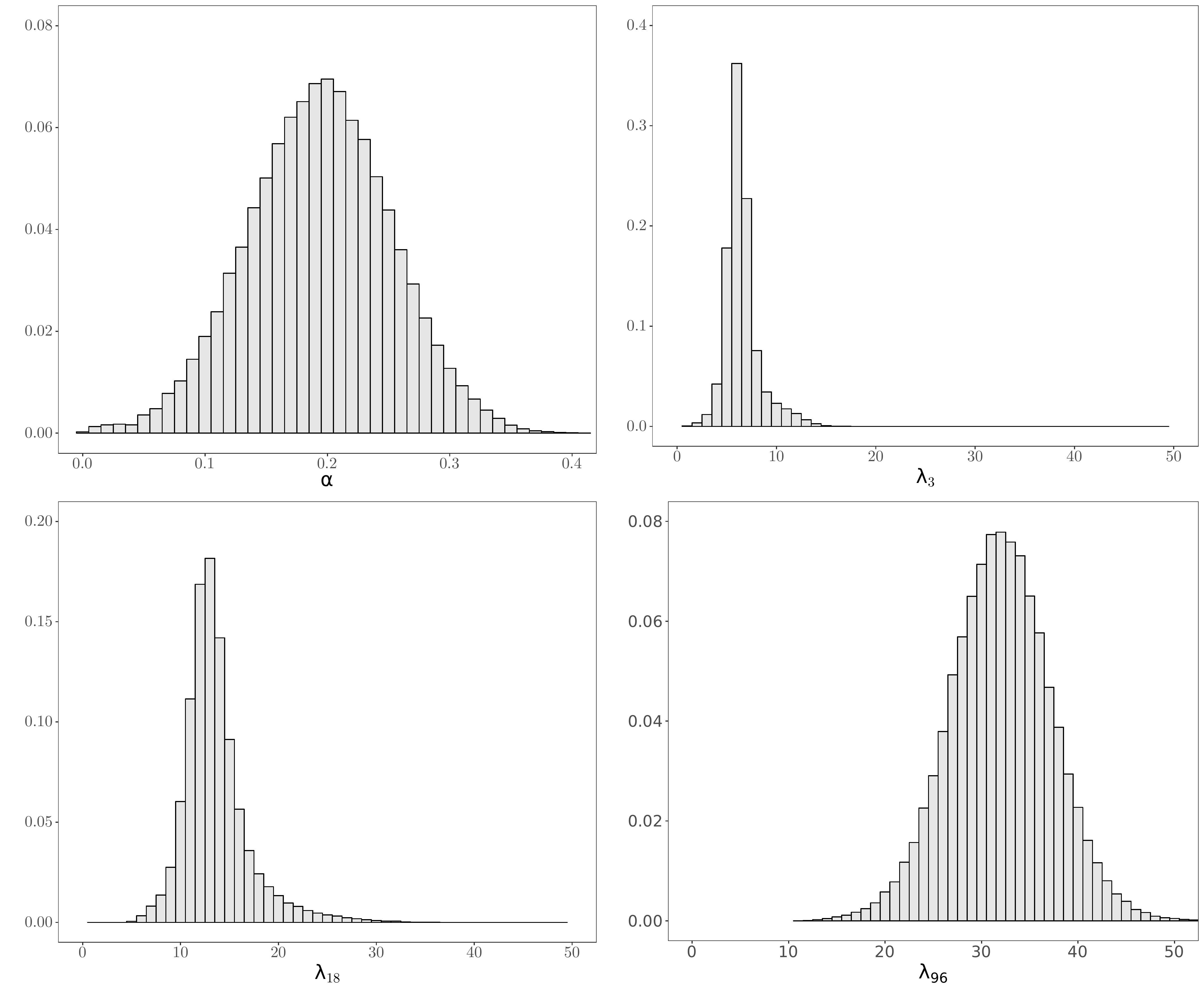}
\caption{Marginal posterior distributions of parameters $\alpha$, $\lambda_{3}$, $\lambda_{18}$, and $\lambda_{96}$, for patrol area 58.} 
\label{fig:post58}
\end{figure}

\begin{figure}[t!]
\centering
\includegraphics[width=16cm, height=14cm]{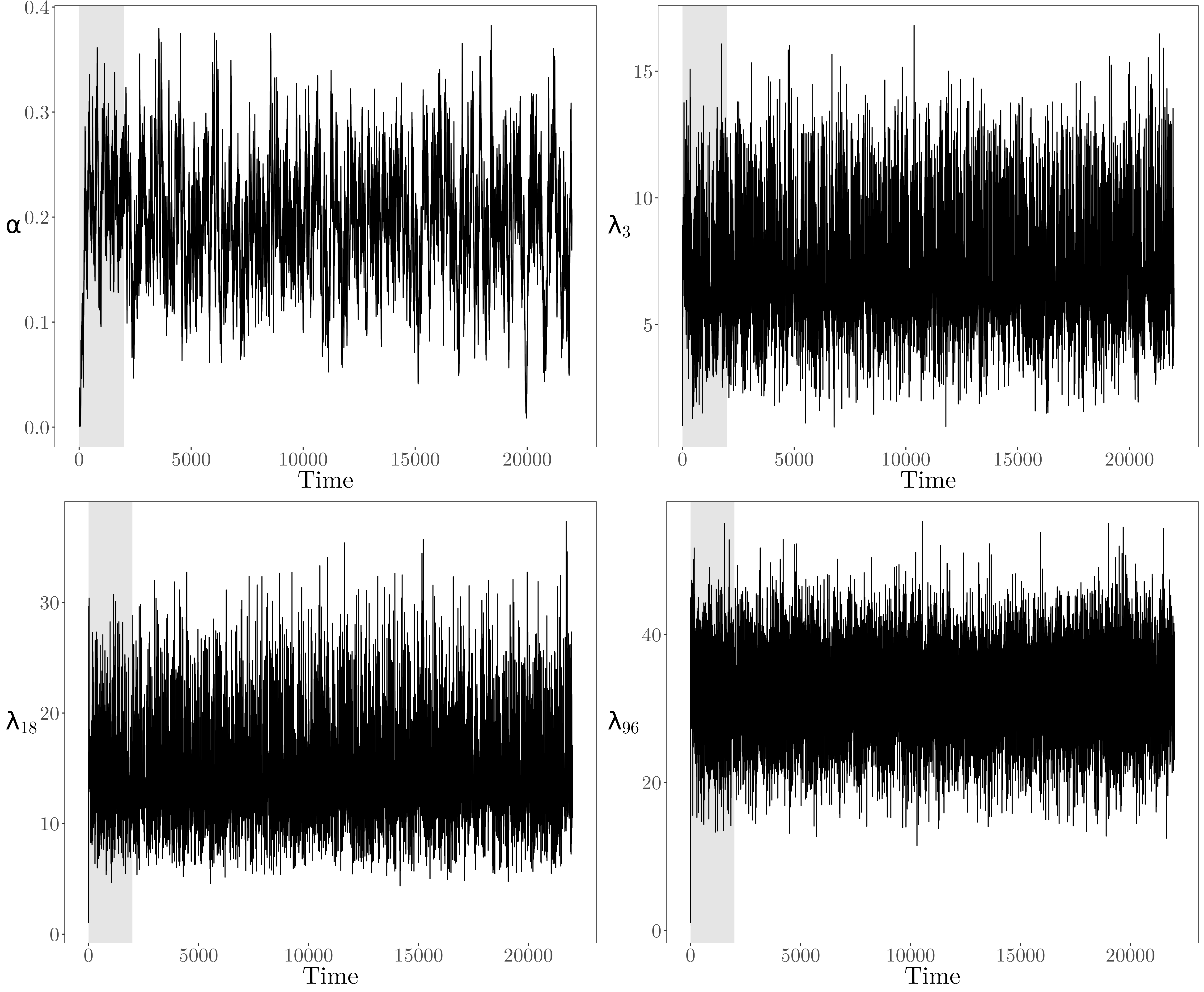}
\caption{Markov chains associated with the marginal posterior distributions of parameters $\alpha$, $\lambda_{3}$, $\lambda_{18}$, and $\lambda_{96}$, for patrol area 58. The gray rectangles indicate the burn-in periods.}
\label{fig:chains58}
\end{figure}

Figure \ref{fig:postk58} shows both the prior and posterior distributions of the number of clusters $K$. While the prior distribution is reasonably flat in the range $1$ to $143$, the posterior distribution is concentrated around $7$, the posterior mode.

\begin{figure}[t!]
\centering
\includegraphics[width=16cm, height=12cm]{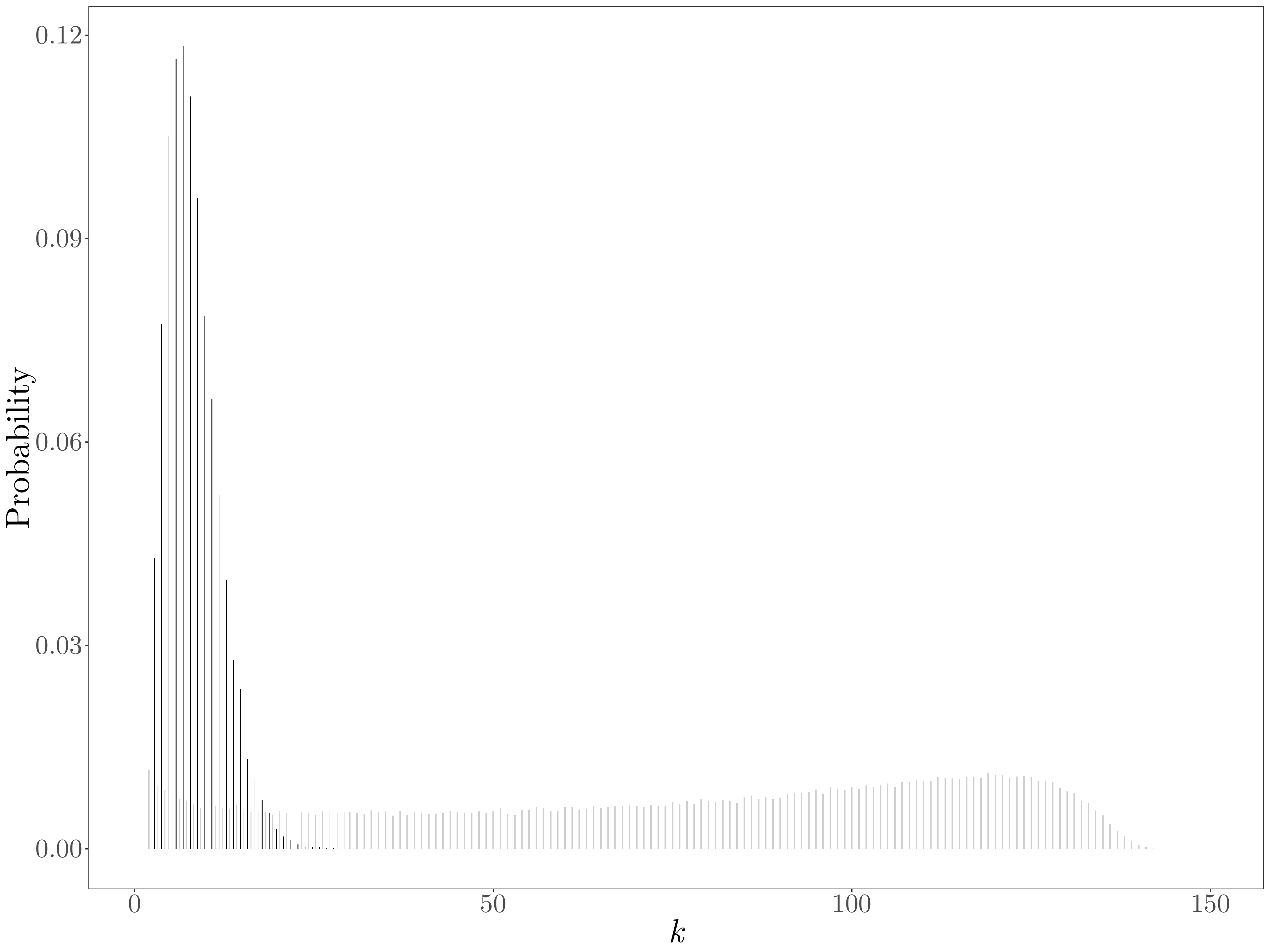}
\caption{Prior and posterior distributions for the number of clusters $K$, in gray and black respectively, for patrol area 58.}
\label{fig:postk58}
\end{figure}

With regard to the forecasting performance within this dataset, Tables \ref{tab:mae_1}, \ref{tab:mae_2}, and \ref{tab:mae_3} present the mean absolute deviations for one, two and three-steps-ahead forecasts for both the DP-INAR(1) and the original INAR(1) model. In these tables, the mean absolute deviations are computed predicting the values of the last $42$, $43$, and $44$ months of each time series, according to the desired number of steps ahead, using the cross-validation procedure described in Section \ref{sec:forecast}. The tables show that the DP-INAR(1) model outperforms the INAR(1) in the majority of the districts.

\section*{Acknowledgements}

Helton Graziadei and Hedibert F. Lopes thank FAPESP  for financial support through grants numbers 2017/10096-6 and 2017/22914-5. 

\clearpage

\bibliographystyle{ieeetr}

\begin{table}[t!]
\caption{Mean absolute deviations for {\bf one-step-ahead} predictions. The first block is formed by  police patrol areas in which the DP-INAR(1) model outperforms the INAR(1) model. The DP-INAR(1) model produces lower mean absolute deviations in $67\%$ of the patrol areas.}
\vspace{0.5cm} 
\centering 
\begin{tabular}{ccc}
\hline \hline
Patrol Area & DP-INAR(1)         & INAR(1)      \\
\hline \hline 
21       & \textbf{1.1395} & 1.1860          \\
27       & \textbf{1.1860} & 1.3488          \\
25       & \textbf{1.2326} & 1.3023          \\
26       & \textbf{1.5116} & 2.0233          \\
24       & \textbf{1.5349} & 1.6512          \\
44       & \textbf{1.7907} & 1.8372          \\
33       & \textbf{1.8140} & 1.9302          \\
56       & \textbf{1.9302} & 2.0930          \\
16       & \textbf{2.0000} & 2.0930          \\
22       & \textbf{2.1163} & 2.2791          \\
17       & \textbf{2.2558} & 2.2791          \\
47       & \textbf{2.2558} & 2.3023          \\
58       & \textbf{2.5116} & 2.9767          \\
14       & \textbf{2.5349} & 2.5814          \\
54       & \textbf{2.5349} & 2.8837          \\
46       & \textbf{2.6279} & 2.7442          \\
15       & \textbf{2.7209} & 2.7907          \\
23       & \textbf{3.2093} & 3.3023          \\
31       & \textbf{3.4419} & 3.4884          \\
12       & \textbf{3.5116} & 3.9070          \\
\hline \hline 
28       & 0.8372          & \textbf{0.8140} \\
43       & 2.1861          & \textbf{2.1628} \\
41       & 2.3953          & \textbf{2.3721} \\
13       & 2.6977          & \textbf{2.6744} \\
53       & 2.8837          & \textbf{2.8372} \\
51       & 2.9302          & \textbf{2.8605} \\
32       & 3.4884          & \textbf{3.4419} \\
34       & 3.6744          & \textbf{3.5814} \\
52       & 3.9302          & \textbf{3.8140} \\
55       & 4.8837          & \textbf{4.5116} \\ 
\hline \hline 
\label{tab:mae_1}
\end{tabular}
\end{table} 

\begin{table}[t!]
\caption{Mean absolute deviations for {\bf two-step-ahead} predictions. The first block is formed by police patrol areas in which the DP-INAR(1) model outperforms the INAR(1) model. The DP-INAR(1) model produces lower mean absolute deviations in $61\%$ of the areas.}
\vspace{0.5cm} 
\centering 
\begin{tabular}{ccc}
\hline \hline 
Patrol Area & INAR(1) & DP-INAR(1) \\
\hline \hline 
11       & \textbf{1.1429} & 1.1667          \\
21       & \textbf{1.1667} & 1.3571          \\
27       & \textbf{1.3095} & 1.3810          \\
24       & \textbf{1.7381} & 1.9524          \\
44       & \textbf{1.8333} & 1.8810          \\
33       & \textbf{1.9524} & 2.2381          \\
41       & \textbf{2.2619} & 2.3095          \\
26       & \textbf{2.2857} & 2.5714          \\
56       & \textbf{2.3095} & 2.5476          \\
22       & \textbf{2.3333} & 2.5714          \\
13       & \textbf{2.6190} & 2.8095          \\
15       & \textbf{2.6667} & 2.8095          \\
51       & \textbf{2.7857} & 2.8095          \\
54       & \textbf{2.9286} & 3.4524          \\
58       & \textbf{2.9762} & 3.5238          \\
32       & \textbf{3.7143} & 3.7619          \\
12       & \textbf{3.9286} & 4.4286          \\
\hline \hline 
25       & 1.2619          & \textbf{1.2381} \\
17       & 2.2143          & \textbf{2.1429} \\
47       & 2.2857          & \textbf{2.2619} \\
14       & 2.4048          & \textbf{2.3571} \\
46       & 2.7143          & \textbf{2.5952} \\
29       & 2.9762          & \textbf{2.9524} \\
42       & 3.3571          & \textbf{3.3333} \\
31       & 3.6905          & \textbf{3.6667} \\
34       & 4.0000          & \textbf{3.8810} \\
55       & 4.4762          & \textbf{4.0000} \\
52       & 4.1667          & \textbf{4.0714} \\
\hline \hline 
\label{tab:mae_2}
\end{tabular}
\end{table}

\begin{table}[t!]
\caption{Mean absolute deviations based on {\bf three-step-ahead} predictions. The first block is formed by police patrol areas in which the DP-INAR(1) model outperforms the INAR(1) model. The DP-INAR(1) model produces lower mean absolute deviations in $70\%$ of the areas.}
\vspace{0.5cm} 
\centering 
\begin{tabular}{ccc}
\hline \hline
Patrol Area & DP-INAR(1)      & INAR(1)       \\
\hline \hline 
11       & \textbf{1.1463} & 1.1951          \\
21       & \textbf{1.1707} & 1.4146          \\
25       & \textbf{1.2683} & 1.2927          \\
27       & \textbf{1.3659} & 1.4146          \\
24       & \textbf{1.8293} & 1.9512          \\
33       & \textbf{2.1220} & 2.3171          \\
17       & \textbf{2.1707} & 2.1951          \\
22       & \textbf{2.3171} & 2.5610          \\
56       & \textbf{2.3415} & 2.6341          \\
26       & \textbf{2.4390} & 2.7073          \\
13       & \textbf{2.6585} & 2.9024          \\
15       & \textbf{2.7561} & 2.9024          \\
53       & \textbf{2.9024} & 2.9756          \\
58       & \textbf{3.0000} & 3.6341          \\
54       & \textbf{3.0244} & 3.3902          \\
23       & \textbf{3.4146} & 3.5366          \\
31       & \textbf{3.5610} & 3.5854          \\
32       & \textbf{3.7805} & 3.9024          \\
12       & \textbf{4.0732} & 4.4634          \\
\hline \hline 
44       & 1.9268          & \textbf{1.9024} \\
43       & 2.3171          & \textbf{2.2683} \\
14       & 2.5610          & \textbf{2.4878} \\
45       & 2.6098          & \textbf{2.5854} \\
46       & 2.6098          & \textbf{2.5366} \\
52       & 4.0244          & \textbf{3.9268} \\
34       & 4.1707          & \textbf{4.1463} \\
55       & 4.2927          & \textbf{4.1951} \\
\hline \hline 
\label{tab:mae_3}
\end{tabular}
\end{table}

\end{document}